\documentclass[journal,twoside,web]{ieeecolor}

\usepackage{generic}
\usepackage{cite}
\usepackage{amsmath,amssymb,amsfonts}
\usepackage{algorithmic}
\usepackage{graphicx}
\usepackage{textcomp}

\usepackage{comment}

\usepackage{lcsys}

\usepackage[caption=false,font=footnotesize]{subfig}

\newtheorem{theorem}{Theorem}
\newtheorem{lemma}{Lemma}

\newtheorem{remark}{Remark}

\newtheorem{assumption}{Assumption}

\begin{document}

\def\BibTeX{{\rm B\kern-.05em{\sc i\kern-.025em b}\kern-.08em
    T\kern-.1667em\lower.7ex\hbox{E}\kern-.125emX}}
\markboth{\journalname, VOL. XX, NO. XX, XXXX 2017}
{Author \MakeLowercase{\textit{et al.}}: Preparation of Papers for IEEE Control Systems Letters (August 2022)}

\title{Dynamic Average Consensus with Privacy Guarantees and Its Application to Battery Energy Storage Systems}

\author{
Mihitha Maithripala, Chenyang Qiu, and
Zongli Lin
\thanks{Work supported in part by the US National Science Foundation under the grant number CMMI–2243930.}
\thanks{Mihitha Maithripala, Chenyang Qiu, and Zongli Lin are with the Charles L. Brown Department of Electrical and
Computer Engineering, University of Virginia, Charlottesville, VA 22904, U.S.A (e-mail:
wpg8hm@virginia.edu; zl5y@virginia.edu. Corresponding author:
Zongli Lin.)}
}

\maketitle
\thispagestyle{empty}
\pagestyle{plain}

\begin{abstract}
A privacy-preserving dynamic average consensus (DAC) algorithm is proposed that achieves consensus while preventing external eavesdroppers from inferring the reference signals and their derivatives. During the initialization phase, each agent generates a set of sinusoidal signals with randomly selected frequencies and exchanges them with its neighboring agents to construct a masking signal. Each agent masks its reference signals using this composite masking signal before executing the DAC update rule. It is shown that the developed scheme preserves the convergence properties of the conventional DAC framework while preventing information leakage to external eavesdroppers. Furthermore, the developed algorithm is applied to state-of-charge (SoC) balancing in a networked battery energy storage system to demonstrate its practical applicability. Simulation results validate the theoretical findings.
\end{abstract}

\begin{IEEEkeywords}
Dynamic average Consensus, Distributed control,  Privacy 
\end{IEEEkeywords}

\vspace{-4mm}
\section{Introduction}
\label{sec:introduction}
\IEEEPARstart{A}{verage} consensus has become a fundamental tool in multi-agent systems. It has been widely adopted across various applications, such as sensor fusion~\cite{olfati2005consensus}, economic dispatch~\cite{chen2022privacy}, and distributed load sharing~\cite{meng2016distributed}, and multi-robot coordination~\cite{zhang2022privacy}. Consensus algorithms are typically categorized as either static or dynamic. Static consensus methods aim to compute the average of agents’ initial states~\cite{olfati2007consensus,ren2010distributed}, whereas dynamic average consensus (DAC) methods are designed to track the average of time-varying signals~\cite{bai2010robust,chen2012distributed,chen2014distributed}. More recently, DAC frameworks have gained increasing attention in practical control applications, such as the coordinated operation of networked battery energy storage systems (BESSs)~\cite{maithripala2026bess} and proportional current-sharing schemes while preserving voltage stability in DC microgrids~\cite{aguirre2025dynamic}.

In~\cite{freeman2006stability}, two DAC estimation schemes, termed the proportional method and the proportional-integral method, are introduced, each ensuring bounded steady-state tracking error. Earlier, the work in~\cite{spanos2005dynamic} proposed one of the first DAC algorithms capable of tracking the average of dynamic reference signals under suitable initial conditions. Finite-time convergence in DAC was investigated in~\cite{chen2012distributed} through the use of discontinuous update laws. To address initialization errors, robust DAC strategies with discontinuous dynamics were proposed in~\cite{george2019robust}.

The DAC methods discussed above generally rely on the exchange of local state information among neighboring agents. Such direct information sharing exposes the system to privacy and security risks as adversaries or eavesdroppers may intercept the transmitted signals. Access to these states enables attackers to infer sensitive quantities, such as local reference signals, and potentially exploit this knowledge to disrupt the operation of the network.
In distributed optimal economic dispatch, information exchange is required, which may expose sensitive data and enable adversaries to disrupt market operations or threaten microgrid reliability~\cite{chen2022privacy}.

Although various approaches have been developed to mitigate these risks. The majority of privacy-preserving methods are limited to static average consensus settings. Differential privacy~\cite{he2020differential,Nozari2015DifferentiallyPA,gao2018differentially} is widely used to protect sensitive information by adding noise to shared data. As a result, the states do not converge precisely to the true average. Studies in~\cite{mo2016privacy, chen2024new} show that carefully designed perturbations can still achieve accurate convergence. A cryptography-based approach is presented in~\cite{ruan2019secure} where transmitted data are encrypted to prevent eavesdropping. However, these methods introduce significant computational and communication costs due to the heavy processing required for encryption and decryption. To address this limitation, a state decomposition method with lower complexity was proposed in~\cite{wang2019privacy}.

The privacy-preserving schemes discussed above are developed for static average consensus problems. The work in~\cite{kia2015dynamic} develops a DAC algorithm that 
preserves the privacy of each agent’s reference signal against a privileged adversary. However, the privacy guarantee is valid only if the reference signals vary with time during an initial period. In addition, the state decomposition approach in~\cite{wang2019privacy} has been further developed for DAC in~\cite{zhang2022privacy}.

In this paper, we consider the DAC algorithm in~\cite{spanos2005dynamic},
\begin{equation}
\dot{\hat{z}}_{{\rm a},i}(t) \!=\! \dot{z}_i(t) - \beta\! \sum_{j=1}^{N}\!\! a_{ij}\big(\hat{z}_{{\rm a},i}(t) \!-\! \hat{z}_{{\rm a},j}(t)\big),\,
\hat{z}_{{\rm a},i}(0) \!=\! z_i(0) ,
\label{eq:1}
\end{equation}
where $z_i(t)$ is the time-varying reference signal of agent $i$, $\hat{z}_{{\rm a},i}(t)$ denotes agent $i$’s approximation of the average $z_{\rm a}(t)=\frac{1}{N}\sum_{i=1}^{N} z_i(t)$, $\beta>0$ is a tuning parameter, and $a_{ij}$ denotes the connectivity between agents $i$ and $j$. The lack of privacy protection in the DAC scheme~\eqref{eq:1} was identified in~\cite{zhang2022privacy}. The study demonstrated that both internal and external eavesdroppers can recover each agent’s private time-varying reference signal $z_i(t)$ and its derivative $\dot{z}_i(t)$ through observer-based methods. The privacy-preserving DAC scheme in~\cite{zhang2022privacy} protects against external eavesdroppers but exhibits slower convergence due to the modification of the graph Laplacian. Motivated by this observation, we proposed in~\cite{maithripala2026dac} a privacy-preserving DAC algorithm based on reference signal masking, which preserves the convergence rate and steady-state tracking accuracy of conventional DAC in~\eqref{eq:1} while safeguarding the reference signal of every agent against external eavesdroppers and maintaining lower computational complexity than the state decomposition method. In this approach, each agent masks its reference signal by adding a real-valued mask constructed from random values exchanged with its neighbors. However, since the mask remains constant after initialization, it protects only the reference signal and does not preserve the privacy of its derivative. Neither does it allow adjustment of the privacy level. 

To address this limitation, this study introduces a new masking strategy based on a sinusoidal masking signal. The proposed method maintains the convergence behavior of~\eqref{eq:1} while ensuring privacy protection for both the reference signals and their derivatives. Protecting both the reference signals and their derivatives is critical in applications such as formation control~\cite{zhang2022privacy} and state-of-charge (SoC) balancing~\cite{maithripala2026bess}, where they convey sensitive operational information. The proposed approach also allows the privacy level to be tuned through an adjustable design parameter.

In this paper, we provide a performance analysis demonstrating that the proposed scheme retains the original convergence rate and steady-state accuracy of \eqref{eq:1}. We further establish privacy against external eavesdropping by demonstrating that neither the reference signals nor their derivatives can be uniquely reconstructed from the available observations. Furthermore, a case study on SoC balancing of a networked BESS illustrates how the proposed method can be integrated with a power allocation law to realize privacy-preserving distributed control. Simulation results further validate the theoretical analysis.

The remainder of the paper is structured as follows. Section~\ref{sec:preliminaries} introduces the graph-theoretic concepts, while Section~\ref{sec:Main} describes the proposed privacy-preserving DAC scheme together with its convergence and privacy analyzes against external eavesdroppers. Section~\ref{sec:app} demonstrates its application to SoC balancing, where the algorithm is integrated with a distributed power allocation law. Section~\ref{sec:simu} provides simulation results validating the convergence and privacy properties. Finally, Section~\ref{sec:con} concludes the paper.

\section{Graph Theory}
\label{sec:preliminaries}
\subsection{Communication Network}
Let $\mathcal{G}(\mathcal{V}, \mathcal{E})$ be a connected undirected graph composed of $N$ agents, numbered as 1, 2, \ldots , $N$. Let $\mathcal{V} = \{1, 2, \ldots, N\}$ denote the set of agents and $\mathcal{E} \subseteq \mathcal{V} \times \mathcal{V}$ the set of edges. For each node $i$, define its neighborhood as $\mathcal{N}_i = \{j \in \mathcal{V} : (i, j) \in \mathcal{E}\}$, and $N_i = |\mathcal{N}_i|$ denotes the number of its neighbors.

The adjacency matrix of the graph $\mathcal{G}$ is $A = [a_{ij}]_{N \times N}$, where $a_{ij} > 0$ if $(j,i) \in \mathcal{E}$ and $a_{ij} = 0$  if $(j,i)\notin \mathcal{E}$. Let $L=(l_{ij}) \in \mathbb{R}^{N \times N}$ represent the Laplacian corresponding to the adjacency matrix $A$, where the entries satisfy $l_{ij} = -a_{ij}$ for $i \neq j$ and $l_{ii} = \sum_{j=1,\, j \neq i}^{N} a_{ij}$. 
Finally, $\mathbf{1}_n$ represents the all-ones vector in $\mathbb{R}^n$.

\begin{assumption}\label{ass: graph}
The communication graph is undirected and connected.
\end{assumption}

\begin{lemma}\label{lemma1}
\cite{kia2019tutorial} Let Assumption~\ref{ass: graph} hold and $\dot{z}_i(t)$ be bounded. The estimate $\hat{z}_{{\rm a},i}(t)$ generated by \eqref{eq:1} converges exponentially to a bounded neighborhood of $\frac{1}{N}\mathbf{1}_N z(t)$ for any $\beta > 0$, that is,
\begin{align*}
\lim_{t \to \infty} \sup \left| \hat{z}_{{\rm a},i}(t) - \tfrac{1}{N}\mathbf 1_Nz(t) \right| &\leq \frac{\gamma}{\beta \lambda_2},\quad i\in\mathcal V . 
\end{align*}
where
$
\gamma = \sup_{\tau \in [t,\infty)}
\left\|\left( I_N - \tfrac{1}{N}\mathbf{1}_N \mathbf{1}_N\tt \right)\dot{z}(\tau)\right\|, 
$
$z(t) = [z_1(t)\; z_2(t)\,\dotsc\, z_N(t)]\tt$, and $\lambda_2$ is the smallest nonzero eigenvalue of Laplacian matrix $L$.
\end{lemma}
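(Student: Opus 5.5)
The plan is to work with the disagreement error and exploit the initialization $\hat{z}_{{\rm a},i}(0)=z_i(0)$. Stack the estimates as $\hat z_{\rm a}(t)=[\hat z_{{\rm a},1}(t)\,\dotsc\,\hat z_{{\rm a},N}(t)]\tt$, so that \eqref{eq:1} reads
\begin{equation*}
\dot{\hat z}_{\rm a}=\dot z-\beta L\hat z_{\rm a},\qquad \hat z_{\rm a}(0)=z(0).
\end{equation*}
Under Assumption~\ref{ass: graph}, $L$ is symmetric positive semidefinite, $L\mathbf 1_N=0$ and $\mathbf 1_N\tt L=0$. Multiplying the dynamics by $\mathbf 1_N\tt$ gives $\frac{d}{dt}\mathbf 1_N\tt(\hat z_{\rm a}-z)=0$, and the initialization makes this quantity zero at $t=0$. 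Hence $\mathbf 1_N\tt\hat z_{\rm a}(t)=\mathbf 1_N\tt z(t)$ for all $t$: the network average of the estimates tracks the true average exactly.

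Next, define $\Pi=I_N-\frac1N\mathbf 1_N\mathbf 1_N\tt$ and the error
\begin{equation*}
e(t)=\hat z_{\rm a}(t)-\tfrac1N\mathbf 1_N\mathbf 1_N\tt z(t).
\end{equation*}
By the invariance above, $e=\Pi\hat z_{\rm a}$, so $\mathbf 1_N\tt e=0$. Moreover $Le=L\hat z_{\rm a}$, since $L\mathbf 1_N=0$. Therefore
\begin{equation*}
\dot e=-\beta Le+\Pi\dot z,
\end{equation*}
and this system evolves in the subspace $\mathbf 1_N^\perp$. On that subspace, $L$ restricted there has eigenvalues $\lambda_2\le\dots\le\lambda_N$, all positive by connectivity. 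Consequently
\begin{equation*}
\|e^{-\beta Lt}v\|\le e^{-\beta\lambda_2 t}\|v\|\quad\text{for all } v\perp\mathbf 1_N,
\end{equation*}
which follows from orthogonal diagonalization of the symmetric matrix $L$.

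By variation of constants,
\begin{equation*}
e(t)=e^{-\beta Lt}e(0)+\int_0^t e^{-\beta L(t-\tau)}\Pi\dot z(\tau)\,d\tau .
\end{equation*}
Since $\Pi\dot z(\tau)\perp\mathbf 1_N$, we obtain
\begin{equation*}
\|e(t)\|\le e^{-\beta\lambda_2 t}\|e(0)\|+\int_0^t e^{-\beta\lambda_2(t-\tau)}\|\Pi\dot z(\tau)\|\,d\tau .
\end{equation*}
To get the stated limsup with $\gamma$ taken as the supremum over a tail, split the integral at an arbitrary $s<t$. On $[0,s]$, bound $\|\Pi\dot z\|$ by its global bound $M$, which exists because $\dot z_i$ is bounded; this piece is at most $M e^{-\beta\lambda_2(t-s)}/(\beta\lambda_2)$ and vanishes as $t\to\infty$. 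On $[s,t]$, bound by $\gamma_s=\sup_{\tau\ge s}\|\Pi\dot z(\tau)\|$, giving at most $\gamma_s/(\beta\lambda_2)$. Thus
\begin{equation*}
\limsup_{t\to\infty}\|e(t)\|\le\frac{\gamma}{\beta\lambda_2},
\end{equation*}
and each component $|e_i|\le\|e\|$ gives the claim for every $i\in\mathcal V$. Exponential convergence to this neighborhood is read off from the decaying terms.

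The main obstacle, mild here, is justifying the exponential bound on the matrix exponential. This requires restricting to $\mathbf 1_N^\perp$, which hinges on the average-invariance established in the first step using the specific initialization. Without that initialization, $e$ would carry a constant component along $\mathbf 1_N$ that never decays. The remaining subtlety is interpreting $\gamma$ as a tail supremum, which the splitting argument handles.
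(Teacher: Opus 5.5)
Your proof is correct and is essentially the standard argument behind this lemma, which the paper does not prove but imports from \cite{kia2019tutorial}. The initialization $\hat z_{\rm a}(0)=z(0)$ and $\mathbf 1_N^{\rm T}L=0$ keep $\mathbf 1_N^{\rm T}(\hat z_{\rm a}-z)\equiv 0$, the error then evolves in $\mathbf 1_N^{\perp}$ where $\|e^{-\beta Lt}v\|\le e^{-\beta\lambda_2 t}\|v\|$, and variation of constants gives the input-to-state bound. The tutorial works in coordinates $[\frac{1}{\sqrt{N}}\mathbf 1_N\;\, R]$ instead of your projector $\Pi$, and your splitting of the integral lets the bound hold with $\gamma$ read as a tail supremum, as the statement is written.
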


\section{Main results}\label{sec:Main}

\subsection{Algorithmic Design for Privacy Protection}

The objective of this study is to develop a privacy-preserving DAC algorithm that prevents external eavesdroppers from inferring each agent’s time-varying reference signal $z_i(t)$ and its derivative $\dot{z}_i(t)$. We consider an adversary monitoring all information exchanged among neighboring agents and attempting to reconstruct the agents’ sensitive states from these observations. To address this problem, we propose a privacy-preserving DAC algorithm based on reference-signal masking.
The algorithm begins with reference signal masking, in which each agent $i$ generates pairwise masking signals $s_{ij}(t)$ for every neighbor $j \in \mathcal{N}_i$ and transmits $s_{ij}(t)$ to agent $j$ through a secure communication link. The signals are defined as
\begin{equation}
s_{ij}(t) = A_{\rm m}\sin\!\big(\omega_{ij} t \big),
\quad j \in \mathcal{N}_i,
\end{equation}
where $\omega_{ij}$ denotes a random constant frequency generated by agent $i$ for agent $j$, and $A_{\rm m}$ denotes a common predefined amplitude shared by all agents. Using the received pairwise signals, agent $i$ then constructs its masking signal $m_i(t)$ according to the following expression.
 \begin{equation}
  m_i(t) = \sum_{j\in\mathcal{N}_i}\big(s_{ji}(t)-s_{ij}(t)\big),
 \label{eq:3}
 \end{equation}
and this construction guarantees that
$
\sum_{i=1}^N m_i(t)=0,
\;
\sum_{i=1}^N \dot m_i(t) = 0,
$
for all $t \ge 0$. The masked reference signal $z_{{\rm m}, i}(t)$ is defined as
\begin{equation}
    z_{{\rm m}, i}(t) = z_i(t) + m_i(t),
    \label{eq:4}
\end{equation}
thereby preserving the global average while concealing the individual reference signals and their dynamics in~\eqref{eq:1}. Finally, the estimator dynamics for each agent take the form
\begin{equation}
\begin{aligned}
\dot{\hat{z}}_{{\rm a},i}(t) &= \dot{z}_{\rm m,i}(t) - \beta \sum_{j=1}^{N} a_{ij}\big(\hat{z}_{{\rm a},i}(t) - \hat{z}_{{\rm a},j}(t)\big),\\
\hat{z}_{{\rm a},i}(0) &= z_i(0)+m_i(0) ,\ \ i\in\mathcal{V}.
\end{aligned}
\label{eq:5}
\end{equation}

\begin{assumption}\label{ass:encryption}
All communications between neighboring agents are encrypted during masking generation to prevent information leakage.

Since this exchange is brief and occurs only during initialization, the associated computational overhead is negligible~\cite{chen2024new}.
\end{assumption}


\subsection{Convergence Analysis}

This subsection analyzes the convergence properties of the developed algorithm.

\begin{theorem}
\label{thm:ppdac-sinusoidal}
Let Assumption~\ref{ass: graph} hold. Each agent constructs the masking signal~\eqref{eq:3}, forms the masked reference signal~\eqref{eq:4}, and implements the DAC update~\eqref{eq:5}. For any reference signals $z_i(t)$ with bounded derivatives, the estimates $\hat{z}_{{\rm a},i}(t)$ track the network true average $\tfrac{1}{N}\mathbf{1}_N^{\rm T} z(t)$ with a steady-state deviation that remains bounded and decreases as $\beta$ increases.
\end{theorem}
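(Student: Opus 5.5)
The plan is to reduce the masked scheme \eqref{eq:5} to the conventional DAC \eqref{eq:1} driven by the masked references $z_{{\rm m},i}(t)$, apply Lemma~\ref{lemma1}, and then use the zero-sum property of the masks to replace the average of the masked signals by the true average.

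\textbf{Step 1: the masked signals are admissible inputs for Lemma~\ref{lemma1}.} Equation~\eqref{eq:5} is exactly \eqref{eq:1} with $z_i$ replaced by $z_{{\rm m},i}=z_i+m_i$. The initialization $\hat z_{{\rm a},i}(0)=z_{{\rm m},i}(0)$ matches the one required in \eqref{eq:1}. The mask derivative is
\[
\dot m_i(t)=A_{\rm m}\sum_{j\in\mathcal N_i}\big(\omega_{ji}\cos(\omega_{ji}t)-\omega_{ij}\cos(\omega_{ij}t)\big),
\]
so $|\dot m_i(t)|\le 2A_{\rm m}N_i\max_{j}|\omega_{ij}|$. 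This is bounded because the frequencies are finite constants. Therefore $\dot z_{{\rm m},i}$ is bounded whenever $\dot z_i$ is.

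\textbf{Step 2: invoke Lemma~\ref{lemma1}.} With $z_{\rm m}=z+m$, Lemma~\ref{lemma1} gives exponential convergence and
\[
\limsup_{t\to\infty}\big|\hat z_{{\rm a},i}(t)-\tfrac1N\mathbf 1_N^{\rm T}z_{\rm m}(t)\big|\le \frac{\gamma_{\rm m}}{\beta\lambda_2},
\qquad
\gamma_{\rm m}=\sup_{\tau\ge t}\big\|\Pi\big(\dot z(\tau)+\dot m(\tau)\big)\big\|,
\]
where $\Pi=I_N-\tfrac1N\mathbf 1_N\mathbf 1_N^{\rm T}$.

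\textbf{Step 3: remove the mask from the average.} Because every pairwise signal $s_{ij}$ appears once with a plus sign (in $m_j$) and once with a minus sign (in $m_i$), we have $\sum_i m_i(t)=0$ for all $t$. Hence $\mathbf 1_N^{\rm T}z_{\rm m}(t)=\mathbf 1_N^{\rm T}z(t)$. The same bound therefore holds with the true average $\tfrac1N\mathbf 1_N^{\rm T}z(t)$ in place of the masked one.

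\textbf{Step 4: bound $\gamma_{\rm m}$ independently of $\beta$.} Since $\|\Pi\|=1$, the triangle inequality gives
\[
\gamma_{\rm m}\le \gamma + \sup_{\tau}\|\dot m(\tau)\| \le \gamma + 2A_{\rm m}\sqrt N\max_i N_i\max_{i,j}|\omega_{ij}|.
\]
This constant does not depend on $\beta$, so the steady-state deviation is bounded by a quantity proportional to $1/\beta$, which decreases as $\beta$ grows.

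The main obstacle is this last step. The mask derivative enlarges $\gamma$, and unlike $\sum_i m_i$ it does not vanish under the projection $\Pi$. The bound must therefore be stated as $(\gamma+\bar\gamma_m)/(\beta\lambda_2)$, where $\bar\gamma_m$ is an explicit mask-dependent constant. If an error bound identical to that of \eqref{eq:1} were desired, one would have to argue, for example by a frequency-domain transfer $\Pi(sI+\beta L)^{-1}s$, that the sinusoidal forcing contributes at most $O(A_{\rm m}\omega/\beta\lambda_2)$. I would settle for the $\beta^{-1}$-decaying bound above, since that is all the statement asks for.
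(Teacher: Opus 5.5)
Your proposal is correct and follows the paper's proof essentially step for step: you view \eqref{eq:5} as the conventional DAC driven by $z_{\rm m}=z+m$, apply Lemma~\ref{lemma1}, use $\mathbf 1_N^{\rm T}m(t)=0$ to replace the masked average by the true average, and use boundedness of $\dot m$ to conclude that the bound $\gamma_{\rm m}/(\beta\lambda_2)$ decays like $1/\beta$. Your explicit $\beta$-independent bound on $\gamma_{\rm m}$ makes precise what the paper asserts in one line; note that the per-agent estimate in your Step~1 should use $\max(|\omega_{ij}|,|\omega_{ji}|)$ rather than $\max_j|\omega_{ij}|$, which your Step~4 already does via $\max_{i,j}|\omega_{ij}|$.
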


\begin{proof}
Define the matrix $\Psi(t)=[s_{ij}(t)]\in\mathbb R^{N\times N}$, where $s_{ij}(t)=0$ if $j\notin\mathcal N_i$ or $j=i$. 
Then, the masking signal vector can be expressed as
$
m(t) = \big(\Psi(t)^{\rm T}-\Psi(t)\big)\mathbf 1_N,
$ where $m(t) = [m_1(t)\; m_2(t)\, \dotsc\, m_N(t)]^{\rm T}$.
By construction,
\[
\mathbf 1_N^{\rm T} m(t)
= \mathbf 1_N^{\rm T}\Psi(t)^{\rm T}\mathbf 1_N
- \mathbf 1_N^{\rm T}\Psi(t)\mathbf 1_N
= 0
\quad t\ge0.
\]
Since $s_{ij}(t)$ are differentiable functions of time, it also follows that
$
\mathbf 1_N^{\rm T} \dot m(t)=0, \; t\ge0.
$

Let $z_{\rm m}(t)=[z_{{\rm m},1}(t)\;z_{{\rm m},2}(t)\;\dots\;z_{{\rm m},N}(t)]^{\rm T}$ and $z(t)=[z_1(t)\; z_2(t)\ \dotsc\; z_N(t)]^{\rm T}$. Then,
$
\mathbf 1_N^{\rm T} z_{\rm m}(t)
= \mathbf 1_N^{\rm T} z(t)+\mathbf 1_N^{\rm T} m(t)
= \mathbf 1_N^{\rm T} z(t),
$
so the network average of the masked reference equals the true average for all $t$. Differentiating $z_{{\rm m},i}(t)$ yields
$
\dot z_{{\rm m},i}(t)=\dot z_i(t)+\dot m_i(t),
$
and expressing the DAC update~\eqref{eq:5} in compact vector form gives
$
\dot{\hat z}_{\rm a}(t)
= \dot z(t)+\dot m(t)-\beta L\hat z_{\rm a}(t),
$ where $\hat{z}_{{\rm a},i}(t)=[\hat{z}_{{\rm a},1}\; \hat{z}_{{\rm a},2}\, \dotsc\, \hat{z}_{{\rm a},N}]^{\rm T}$. This system coincides with the conventional DAC dynamics driven by the inputs $\dot z_i(t)$ and an additional disturbance terms $\dot m_i(t)$. Therefore, the standard DAC result in Lemma~\ref{lemma1} implies that the tracking error of the designed algorithm satisfies
\[
\limsup_{t\to\infty}
\left|\hat z_{{\rm a},i}(t)-\tfrac{1}{N}\mathbf 1_N^{\rm T} z(t)\right|
\le \frac{\gamma_s}{\beta\lambda_2},
\quad i\in\mathcal V,
\]
where
$
\gamma_s
= \sup_{\tau\ge t}
\left\|
\Big(I_N-\tfrac{1}{N}\mathbf 1_N\mathbf 1_N^{\rm T}\Big)
\big(\dot z(\tau)+\dot m(\tau)\big)
\right\|,
$
and $\lambda_2$ denotes the second-smallest eigenvalue of $L$.

Because $\dot m(t)$ is bounded, the error bound can be made arbitrarily small by selecting the gain $\beta$ sufficiently large. This completes the proof.
\end{proof}

\begin{remark}
The convergence speed of consensus-based algorithms is characterized by $\lambda_2$ the smallest nonzero eigenvalue of Laplacian matrix $L$~\cite{olfati2007consensus}. For dynamic average consensus, the exponential decay rate of the error is governed by $\beta\lambda_2$~\cite{kia2019tutorial}. As shown in Theorem~1, the proposed reference-masking mechanism does not modify the Laplacian matrix. Consequently, the developed scheme retains the exponential convergence rate of~\eqref{eq:1}.
\end{remark}

\vspace{-3mm}
\subsection{Privacy Analysis }

We examine the privacy guarantees of the proposed scheme under the presence of an external eavesdropper. Assuming that the eavesdropper has access to all communication link in the network, the information set available to it can be expressed as
$
I_{\rm obs}=\{\, A,\ \beta,\ \hat{z}_{{\rm a},i}(t)\mid i\in\mathcal V,\ t\ge 0 \,\}.
$

\begin{theorem}
\label{thm:ext_privacy}
Let Assumptions~\ref{ass: graph} and~\ref{ass:encryption} hold. Consider the developed privacy-preserving DAC scheme in which each agent uses the masked reference signal \eqref{eq:4}, with the node mask constructed as in \eqref{eq:3} and runs the DAC update~\eqref{eq:5}. Information available to an external eavesdropper is $I_{\rm obs}$.
Then the external eavesdropper cannot uniquely identify any individual agent's reference trajectory $z_i(t)$ or its derivative $\dot{z}_i(t)$ from its observations.
\end{theorem}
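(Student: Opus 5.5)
The plan is an indistinguishability argument. I will show that for any true reference trajectories $z(t)$ there exist different trajectories $z'(t)$, with $z'_i\neq z_i$ and $\dot z'_i\neq\dot z_i$ for a chosen agent $i$, that generate exactly the same observation set $I_{\rm obs}$. If two distinct private trajectories give identical observations, the eavesdropper cannot uniquely identify either of them.

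First, I reduce what the eavesdropper can actually compute. Since $A$, $\beta$ and all $\hat z_{{\rm a},j}(t)$ are known, it can differentiate the observed estimates and rearrange \eqref{eq:5}:
\[
\dot z_{{\rm m},j}(t)=\dot{\hat z}_{{\rm a},j}(t)+\beta\sum_{k=1}^N a_{jk}\big(\hat z_{{\rm a},j}(t)-\hat z_{{\rm a},k}(t)\big).
\]
It also sees $\hat z_{{\rm a},j}(0)=z_{{\rm m},j}(0)$. So it can recover the whole masked signal $z_{{\rm m},j}(t)=z_j(t)+m_j(t)$ and its derivative for every $j$, and nothing more. Conversely, the map from $z_{\rm m}(\cdot)$ to the observed trajectories is deterministic, so any two scenarios with the same $z_{\rm m}(\cdot)$ are observationally identical. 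Privacy therefore reduces to showing that $z_{\rm m}$ does not pin down $z_i$ or $\dot z_i$.

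Second, I build the alternative scenario. By Assumption~\ref{ass:encryption}, the frequencies $\omega_{jk}$ and the signals $s_{jk}$ are never seen by the eavesdropper; at most the structure of \eqref{eq:3} and, conservatively, the amplitude $A_{\rm m}$ are known. Fix agent $i$ and a neighbor $j\in\mathcal N_i$, which exists because the graph is connected. Replace $\omega_{ij}$ by some $\omega'_{ij}\neq\omega_{ij}$ and keep every other frequency unchanged. Then $s'_{ij}=s_{ij}+\delta$ with $\delta(t)=A_{\rm m}\big(\sin\omega'_{ij}t-\sin\omega_{ij}t\big)$, so by \eqref{eq:3} the masks become $m'_i=m_i-\delta$, $m'_j=m_j+\delta$, and $m'_k=m_k$ for $k\neq i,j$. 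Define
\[
z'_i=z_i+\delta,\qquad z'_j=z_j-\delta,\qquad z'_k=z_k\ \ (k\neq i,j).
\]
This gives $z'_{\rm m}=z_{\rm m}$ for all $t\ge0$, including $t=0$, where $\delta(0)=0$ so the initial conditions also agree. Hence $I_{\rm obs}$ is identical in both scenarios. Moreover $\delta\not\equiv0$ and $\dot\delta(t)=A_{\rm m}\big(\omega'_{ij}\cos\omega'_{ij}t-\omega_{ij}\cos\omega_{ij}t\big)\not\equiv0$, so both $z_i$ and $\dot z_i$ differ between the two scenarios. The derivative $\dot z'_i$ stays bounded, so the alternative is an admissible input in the sense of Theorem~\ref{thm:ppdac-sinusoidal}. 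The average is preserved as well, since $\delta$ and $-\delta$ cancel.

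The main obstacle is making precise what "cannot uniquely identify" means given the eavesdropper's prior knowledge. If it knows that masks are sums of sinusoids with amplitude $A_{\rm m}$, it might try to separate $z_{\rm m}$ into a sinusoidal part and the rest. The construction above handles this, because the alternative mask $m'$ is itself a valid mask generated by a different random frequency, so both scenarios are consistent with the eavesdropper's full model. I would also note that the set of such alternatives is a continuum, one for each $\omega'_{ij}$. This suggests that the adjustable parameter $A_{\rm m}$ sets the size of the ambiguity: $|z'_i-z_i|\le 2A_{\rm m}$, which ties the argument to a tunable privacy level.
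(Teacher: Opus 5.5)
Your argument is correct and uses the same indistinguishability idea as the paper. You exhibit a second pair $(z',m')$ with $z'+m'=z+m$, so that \eqref{eq:5}, including its initial condition, produces the same $\hat z_{\rm a}(\cdot)$. The genuine difference is the choice of perturbation.

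The paper takes an \emph{arbitrary} continuously differentiable $\delta(t)$ with $\mathbf 1_N^{\rm T}\delta(t)=0$, sets $m'=m-\delta$, and asserts that $m'$ is ``admissible under the same construction'' because it is zero-sum. This yields a very large family of ambiguities, and it suffices if the eavesdropper is assumed to know only the zero-sum property of the masks. However, the admissibility claim does not hold for generic $\delta$. A mask produced by \eqref{eq:3} is a sum of sinusoids of amplitude $A_{\rm m}$ and in particular satisfies $m(0)=0$, whereas a generic $\delta$ has $\delta(0)\neq 0$. In fact $m_i(0)=0$ means $\hat z_{{\rm a},i}(0)=z_i(0)$ is revealed to the eavesdropper, so only trajectory-level non-identifiability can hold.

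Your perturbation is induced by changing one secret frequency $\omega_{ij}$, which makes it a two-agent special case of the paper's $\delta$. In exchange, your $m'$ is literally generated by \eqref{eq:3}, with $\delta(0)=0$ automatically. Your alternative scenario therefore stays consistent even against an eavesdropper who knows the mask structure and $A_{\rm m}$, which is the more natural threat model.

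Two further points go beyond the paper. Your preliminary reduction, that the eavesdropper recovers exactly $z_{\rm m}(\cdot)$ and nothing more, makes explicit what the paper uses only implicitly. Your bound $|z_i'-z_i|\le 2A_{\rm m}$ gives an analytic link to the tunable privacy level, which the paper only illustrates numerically.
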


\begin{proof}
Let $z(t)=[z_1(t)\; z_2(t)\, \dotsc\, z_N(t)]^{\rm T}$, $m(t) = [m_1(t)\; m_2(t)\, \dotsc\, m_N(t)]^{\rm T}$, and $\hat{z}_{{\rm a},i}(t)=[\hat{z}_{{\rm a},1}\; \hat{z}_{{\rm a},2}\, \dotsc\, \hat{z}_{{\rm a},N}]^{\rm T} $. Masks construction guarantees that
$
\label{eq:mask_zero_sum_ext}
\mathbf 1_N^{\rm T}\ m(t)=0,
\qquad
\mathbf 1_N^{\rm T} \dot m(t)=0
$, for all $t\ge 0$.

Define the stacked masked reference signals $z_{\rm m}(t)=z(t)+m(t)$. Then
$
\mathbf 1_N^{\rm T} z_{\rm m}(t)
=\mathbf 1_N^{\rm T} z(t),
$ and
$
\mathbf 1_N^{\rm T} \dot z_{\rm m}(t)
=\mathbf 1_N^{\rm T} \dot z(t).
$

The stacked DAC dynamics driven by the masked references can be written as
\begin{equation}
\label{eq:dac_stack_ext}
\dot{\hat z}_{\rm a}(t)
=\dot z_{\rm m}(t)-\beta L\hat z_{\rm a}(t)
=\dot z(t)+\dot m(t)-\beta L\hat z_{\rm a}(t).
\end{equation}
The external eavesdropper observes $\hat z_{\rm a}(t)$ for all $t\ge 0$ and knows $L$ and $\beta$, but does not observe $m(t)$, $\dot m(t)$, or secret signals $s_{ij}(t)$.

We prove non-identifiability by constructing infinitely many distinct reference trajectories that generate the same observed estimator trajectory.
Let $\delta(t)=[\delta_1(t)\; \delta_2(t)\, \dotsc\, \delta_N(t)]^{\rm T}\in\mathbb R^N$ be any continuously differentiable signals satisfying
\begin{equation}
\label{eq:delta_zero_sum_ext}
\mathbf 1_N^{\rm T} \delta(t)=0,
\qquad
\mathbf 1_N^{\rm T} \dot\delta(t)=0,
\qquad  t\ge 0.
\end{equation}
Define an alternative execution by
$
z'(t)=z(t)+\delta(t),\;
 m'(t)=m(t)-\delta(t).
$
It follows that
$
\mathbf 1_N^{\rm T}m'(t)
=\mathbf 1_N^{\rm T} m(t)-\mathbf 1_N^{\rm T} \delta(t)
=0,
\;
\mathbf 1_N^{\rm T} \dot m'(t)=0,
$
so $m'(t)$ satisfies the same zero-sum property as the original mask $m(t)$ and is therefore admissible under the same construction.

Moreover, the masked references and their derivatives are unchanged:
$
z'(t)+m'(t)=z(t)+m(t)=z_{\rm m}(t),
$
$
\dot z'(t)+\dot m'(t)
=\dot z(t)+\dot m(t)
=\dot z_{\rm m}(t).
$
Consequently, the DAC dynamics~\eqref{eq:dac_stack_ext} driven by $(z,m)$ and by $(z',m')$ are identical and generate exactly the same estimator trajectory $\hat z_{\rm a}(t)$ for all $t\ge 0$.

Since $\delta(t)$ can be chosen arbitrarily (subject to the zero-sum constraint~\eqref{eq:delta_zero_sum_ext}), for any agent $i$ there exist infinitely many distinct trajectories $\delta_i(t)$ satisfying
$
\mathbf 1_N^{\rm T} \delta(t)=0, \quad \forall t\ge 0,
$
such that
$
z_i'(t)=z_i(t)+\delta_i(t), \;
\dot z_i'(t)=\dot z_i(t)+\dot\delta_i(t),
$
which produce the same observed estimator $\hat z_{\rm a}(t)$.

Therefore, neither $z_i(t)$ nor its derivative $\dot z_i(t)$ can be uniquely identified from the information available to the external eavesdropper.
This completes the proof.
\end{proof}

\section{Application to SoC balancing for Battery Energy Storage Systems}\label{sec:app}

In this section, we apply the privacy-preserving DAC algorithm to the SoC balancing and power tracking problems in a networked BESS, as shown in Fig.~\ref{fig:MG}.

\vspace{-4mm}
\subsection{The SoC Dynamics and Control Objective}

Consider a BESS composed of $N$ battery units. For each unit $i \in \mathcal V$, the SoC $S_i(t)$ satisfies the Coulomb counting model
\vspace{-2mm}
\begin{equation}\label{eq:9}
S_i(t)=S_i(0)-\frac{1}{C_i}\int_0^t i_i(\tau)\,d\tau,
\end{equation}
where $C_i$ is the battery capacity and $i_i(t)$ is the output current with $i_i(t) > 0$ denoting discharging and $i_i(t) < 0$ charging. Differentiating \eqref{eq:9} yields
\vspace{-1mm}
\begin{equation}\label{eq:10}
\dot S_i(t)=-\frac{1}{C_i}i_i(t).
\end{equation}

The power delivered by unit $i$ is given by 
$
p_i(t)=V_i i_i(t),
$
where $V_i$ represents the DC--DC converter terminal voltage, which is assumed to be constant. Hence,
\vspace{-2mm}
\begin{equation}
\dot S_i(t)=-\frac{1}{C_i V_i}p_i(t), \quad i \in \mathcal V.
\end{equation}

Our objective is to design a privay-preserving distributed power allocation law using only neighbor-to-neighbor communication to achieve SoC balancing and total power tracking.

\textbf{Problem 1:} We study a BESS composed of $N$ battery units whose SoC dynamics are described by~\eqref{eq:10}. Design a privacy-preserving distributed power allocation law such that

(i) the SoC levels of the battery units are balanced, that is,
\begin{equation}
\lim_{t\to\infty}|S_i(t)-S_j(t)|\le\varepsilon_S,
\quad  i,j \in \mathcal V,
\end{equation}

(ii) the desired total power is tracked by the total power output of the battery units, satisfying
\begin{equation}
\lim_{t\to\infty}\left|\sum_{i=1}^N p_i(t)-p^*(t)\right|
\le\varepsilon_P,
\end{equation}
where $\varepsilon_S>0$ and $\varepsilon_P>0$ denote prescribed accuracy levels.

\begin{assumption}\label{ass:power}
$p^*(t)$ and its time derivative are bounded such that
\vspace{-2mm}
\begin{equation}
p_{\rm min} \le |p^*(t)| \le p_{\rm max},\quad
|\dot{p}^*(t)| \le \psi,\quad t \ge 0,
\end{equation}
where $p_{\rm min}>0$, $ p_{\rm max}>0$, and $\psi>0$ are constants.
\end{assumption}

\vspace{1mm}
\begin{assumption}\label{ass:access}
The desired total power $p^*(t)$ is available to at least one battery unit.
\end{assumption}

\subsection{Power Allocation and Average Estimation}
The state of battery unit $i$ is defined according to the operating mode as
\begin{equation}
x_i(t)=
\begin{cases}
x_{{\rm d},i}(t)=C_i V_i S_i(t), & \text{during discharging},\\
x_{{\rm c},i}(t)=C_i V_i\big(1-S_i(t)\big), & \text{during charging}.
\end{cases}
\end{equation}

The variable $x_i(t)$ represents the amount of energy that can be stored during the charging phase or the available electrical energy that can be discharged during the discharging phase. Physical constraints imply the existence of constants $a_1>0$ and $a_2>0$ satisfying
$
a_1 \le x_i(t) \le a_2, t\ge0, i\in \mathcal V.
$

The corresponding energy evolution satisfies
\begin{align}
\dot{x}_{{\rm d},i}(t) &= C_i V_i \dot{S}_i(t) = -p_i(t) \quad \text{(discharging)}, \\
\dot{x}_{{\rm c},i}(t) &= -C_i V_i \dot{S}_i(t) = p_i(t) \quad \text{(charging)}.
\end{align}

Let $p^*(t)$ be the total desired power. The local power allocation rule is
\begin{equation}
p_i(t)=
\begin{cases}
\dfrac{x_{{\rm d},i}(t)}{\sum_{j=1}^{N}x_{{\rm d},j}(t)}\,p^*(t), & \text{discharging},\\
\dfrac{x_{{\rm c},i}(t)}{\sum_{j=1}^{N}x_{{\rm c},j}(t)}\,p^*(t), & \text{charging}.
\end{cases}
\end{equation}

Define the network average unit state
\begin{equation}
x_{\rm a}(t)=
\begin{cases}
\dfrac{1}{N}\sum_{i=1}^{N}x_{{\rm d},i}(t), & \text{discharging},\\
\dfrac{1}{N}\sum_{i=1}^{N}x_{{\rm c},i}(t), & \text{charging},
\end{cases}
\end{equation}
and the average desired power
$
p_{\rm a}(t)=\dfrac{1}{N}p^*(t).
$
Then the power allocation can be written compactly as
\begin{equation}
p_i(t)=\frac{x_i(t)}{x_{\rm a}(t)}\,p_{\rm a}(t).
\end{equation}

Because both $x_{\rm a}(t)$ and $p_{\rm a}(t)$ depend on global information, they are obtained through distributed estimation. To estimate the average unit state $x_{\rm a}(t)$, we utilize the designed privacy-preserving DAC scheme in Section~\ref{sec:Main} as follows,
\begin{equation}
\begin{aligned}
\dot{\hat{x}}_{{\rm a},i}(t) &= \dot{x}_{\rm m,i}(t) - \beta \sum_{j=1}^{N} a_{ij}\big(\hat{x}_{{\rm a},i}(t) - \hat{x}_{{\rm a},j}(t)\big),\\
\hat{x}_{{\rm a},i}(0) &= x_i(0)+m_i(0) ,\ \ i\in\mathcal{V},
\end{aligned}
\label{eq:PP-DAC}
\end{equation}
where $\hat{x}_{{\rm a},i}(t)$ is the estimate of $x_{\rm a}(t)$. 
According to Theorem~\ref{thm:ppdac-sinusoidal}, $\hat{x}_{{\rm a},i}(t)$ approaches a bounded neighborhood of $\frac{1}{N}\sum_{i=1}^{N} x_i(t)$. Furthermore, Theorem~\ref{thm:ext_privacy} shows that the individual battery unit’s state $x_i(t)$ and its derivative ($\dot{x}_i(t) = -p_i(t)$ during discharging and $\dot{x}_i(t) = p_i(t)$ during charging) remain inaccessible to any external eavesdropper.

To estimate the average desired power, each unit implements
\begin{equation*}
\dot{\hat{p}}_{{\rm a},i}(t)\!=\!-\kappa \!\left( \sum_{j=1}^{N} a_{ij} (\hat{p}_{{\rm a},i}(t) - \hat{p}_{{\rm a},j}(t))\! + b_i (\hat{p}_{{\rm a},i}(t) - p_{\rm a}(t))\! \!\right)\!\!,
\end{equation*}
\begin{equation}\label{eq:p_estimator}
\hat{p}_{{\rm a},i}(0) = 0,
\end{equation}
where $\hat{p}_{{\rm a},i}(t)$ is the estimate of $p_{\rm a}(t)$, $\kappa>0$ is a tuning parameter, and $b_i\ge0$ indicates whether unit $i$ has direct access to $p_{\rm a}(t)$. Using these estimates, the implementable distributed power law becomes
\begin{equation}\label{eq:power allocation}
p_i(t)=
\begin{cases}
\dfrac{x_{{\rm d},i}(t)}{\max\!\left\{\frac{a_1}{2},\,\hat{x}_{{\rm a},i}(t)\right\}}
\,\hat{p}_{{\rm a},i}(t), & \text{discharging},\\
\dfrac{x_{{\rm c},i}(t)}{\max\!\left\{\frac{a_1}{2},\,\hat{x}_{{\rm a},i}(t)\right\}}
\,\hat{p}_{{\rm a},i}(t), & \text{charging}.
\end{cases}
\end{equation}

\begin{lemma}\cite{cai2016distributed}
\label{lem:2}
Let Assumptions~\ref{ass: graph}, \ref{ass:power}, and \ref{ass:access} hold. 
For any $\kappa > 0$, the estimator $\hat{p}_{{\rm a},i}(t)$ defined in~\eqref{eq:p_estimator} approaches a bounded neighborhood of $p_{\rm a}(t)$ at an exponential rate. Moreover, the steady-state estimation error satisfies
\[
\limsup_{t \to \infty} 
\left| \hat{p}_{{\rm a},i}(t) - p_{\rm a}(t) \right|
\le \frac{\psi \gamma_{\rm p}}{\kappa},
\]
where $\gamma_{\rm p} > 0$ is a constant.
\end{lemma}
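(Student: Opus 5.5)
The estimator \eqref{eq:p_estimator} is a leader-following consensus protocol with a time-varying reference $p_{\rm a}(t)$, injected only at the nodes with $b_i>0$. I would treat it as a stable linear system driven by $\dot p_{\rm a}(t)$ and bound the steady-state error with an input-to-state argument.

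First, stack the estimates as $\hat p_{\rm a}(t)=[\hat p_{{\rm a},1}(t)\;\dotsc\;\hat p_{{\rm a},N}(t)]^{\rm T}$ and set $B={\rm diag}(b_1,\dots,b_N)$. The dynamics become
\[
\dot{\hat p}_{\rm a}(t)=-\kappa (L+B)\hat p_{\rm a}(t)+\kappa B\mathbf 1_N p_{\rm a}(t).
\]
Define the error $e(t)=\hat p_{\rm a}(t)-\mathbf 1_N p_{\rm a}(t)$. Since $L\mathbf 1_N=0$, this gives
\[
\dot e(t)=-\kappa H e(t)-\mathbf 1_N\dot p_{\rm a}(t), \qquad H=L+B .
\]

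Second, establish that $H$ is symmetric positive definite. Symmetry holds because the graph is undirected (Assumption~\ref{ass: graph}). Positive semidefiniteness holds because $x^{\rm T}Hx=\tfrac12\sum_{i,j}a_{ij}(x_i-x_j)^2+\sum_i b_ix_i^2$. For definiteness, suppose $x^{\rm T}Hx=0$. Connectivity forces $x=c\mathbf 1_N$. Assumption~\ref{ass:access} gives some $b_k>0$, which forces $c=0$. Hence $\lambda_{\min}(H)>0$.

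Third, solve the error equation by variation of constants:
\[
e(t)=e^{-\kappa Ht}e(0)-\int_0^t e^{-\kappa H(t-\tau)}\mathbf 1_N\dot p_{\rm a}(\tau)\,d\tau .
\]
Because $H$ is symmetric, $\|e^{-\kappa Ht}\|=e^{-\kappa\lambda_{\min}(H)t}$. Assumption~\ref{ass:power} gives $|\dot p_{\rm a}|=|\dot p^*|/N\le\psi/N$. The first term therefore decays exponentially at rate $\kappa\lambda_{\min}(H)$, and the integral is bounded by $\frac{\sqrt N\,\psi}{N\kappa\lambda_{\min}(H)}$. This yields
\[
\limsup_{t\to\infty}|\hat p_{{\rm a},i}(t)-p_{\rm a}(t)|\le\frac{\psi\gamma_{\rm p}}{\kappa}, \qquad \gamma_{\rm p}=\frac{1}{\sqrt N\,\lambda_{\min}(H)} .
\]
The bound holds for every $\kappa>0$. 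The initial condition $\hat p_{{\rm a},i}(0)=0$ only affects the transient, since $|p_{\rm a}(0)|\le p_{\max}/N$.

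The main obstacle is the positive definiteness of $H$, which is exactly where connectivity and the presence of at least one informed unit enter. Everything else is standard linear-systems bookkeeping. If the $b_i$ were instead allowed to be binary indicators, nothing changes, since only $b_k>0$ for some $k$ is used.
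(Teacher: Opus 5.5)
Your proof is correct. The paper gives no proof of this lemma; it imports the result from \cite{cai2016distributed}, so your argument supplies a self-contained derivation rather than competing with an in-paper proof.

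You write the estimator as a pinned-consensus error system $\dot e=-\kappa(L+B)e-\mathbf 1_N\dot p_{\rm a}$, show $L+B\succ 0$ from connectivity plus one informed unit, and bound the convolution term via $\|e^{-\kappa(L+B)t}\|=e^{-\kappa\lambda_{\min}(L+B)t}$. This is the standard argument for this leader-tracking estimator. It gives you something the paper leaves unspecified: an explicit, $\kappa$-independent constant $\gamma_{\rm p}=1/(\sqrt N\,\lambda_{\min}(L+B))$. With that constant, the gain $\kappa$ needed to make the bounds $\Delta^\pm$ in the proof of Theorem~\ref{thm:3} small can actually be computed.

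A few points should be stated explicitly, all of them implicit in the paper's setup:
\begin{itemize}
\item Assumption~\ref{ass:access} must be read as $b_k>0$ for some $k$; you already say this.
\item Symmetry of $L$, and hence the spectral-norm identity for $e^{-\kappa(L+B)t}$, uses the usual convention $a_{ij}=a_{ji}$ for undirected graphs.
\item The variation-of-constants formula needs only that $p^*$ is differentiable with bounded derivative, hence Lipschitz. Assumption~\ref{ass:power} provides this.
\end{itemize}
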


The following theorem characterizes the performance of the proposed distributed power allocation scheme.
\begin{theorem}\label{thm:3}
Suppose that the battery parameters $C_i$ and $V_i$ are known, and 
Assumptions~\ref{ass: graph}--\ref{ass:access} hold. Then, the distributed power allocation law~\eqref{eq:power allocation}, 
together with the privacy-preserving average unit state estimator~\eqref{eq:PP-DAC}, and average desired power estimator~\eqref{eq:p_estimator}, 
achieves the objective stated in Problem~1.
\end{theorem}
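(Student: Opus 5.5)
Fix one operating mode, say discharging; the charging case is identical with $x_{{\rm c},i}$ and a sign change. The plan is to regard the exact law $p_i=\frac{x_i}{x_{\rm a}}p_{\rm a}$ as the nominal system and the implemented law~\eqref{eq:power allocation} as a perturbation of it, with the perturbation controlled by the estimation errors of Theorem~\ref{thm:ppdac-sinusoidal} and Lemma~\ref{lem:2}.

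\emph{Step 1: estimator errors.} Introduce
\[
e_{x,i}(t)=\hat x_{{\rm a},i}(t)-x_{\rm a}(t),\qquad e_{p,i}(t)=\hat p_{{\rm a},i}(t)-p_{\rm a}(t).
\]
The bounds $a_1\le x_i\le a_2$ and $|\hat p_{{\rm a},i}|$ bounded (Lemma~\ref{lem:2} with Assumption~\ref{ass:power}) imply that $p_i$, and hence $\dot x_i=-p_i$, is bounded. The sinusoidal masks have derivatives bounded by $A_{\rm m}\max|\omega_{ij}|$. Theorem~\ref{thm:ppdac-sinusoidal} then gives $\limsup|e_{x,i}|\le \gamma_s/(\beta\lambda_2)$, and Lemma~\ref{lem:2} gives $\limsup|e_{p,i}|\le\psi\gamma_{\rm p}/\kappa$.

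\emph{Step 2: removing the saturation.} Since $x_{\rm a}\ge a_1$, once $\beta$ is large enough that $\gamma_s/(\beta\lambda_2)<a_1/2$, we have $\hat x_{{\rm a},i}>a_1/2$ after some finite time. From then on the $\max$ is inactive, and
\[
p_i=\frac{x_i}{x_{\rm a}}p_{\rm a}+\eta_i,\qquad |\eta_i|\le c_1|e_{x,i}|+c_2|e_{p,i}|,
\]
with constants $c_1,c_2$ depending only on $a_1,a_2,p_{\max}$. This rests on the Lipschitz continuity of $(\hat x,\hat p)\mapsto x_i\hat p/\hat x$ on the region $\hat x\ge a_1/2$.

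\emph{Step 3: power tracking.} Summing over $i$ gives $\sum_i p_i-p^*=\sum_i\eta_i$, so
\[
\varepsilon_P=N\bigl(c_1\gamma_s/(\beta\lambda_2)+c_2\psi\gamma_{\rm p}/\kappa\bigr),
\]
which can be made smaller than any prescribed level by increasing $\beta$ and $\kappa$.

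\emph{Step 4: SoC balancing.} This is the main obstacle. The plan is to work with the ratios $r_i=x_i/x_{\rm a}$, or equivalently the normalized differences $x_i/\sum_j x_j$. Using $\dot x_{\rm a}=-p_{\rm a}+\bar\eta$, where $\bar\eta$ is the average of the $\eta_i$, one gets
\[
\dot r_i=\frac{-p_i x_{\rm a}+x_i(p_{\rm a}-\bar\eta)}{x_{\rm a}^2}.
\]
The nominal part cancels exactly, so $\dot r_i=O(|\eta|)/x_{\rm a}$. The difficulty is that the nominal dynamics are only neutrally stable in $r$: the ratios are preserved rather than driven together. This means the exact proportional law equalizes $x_i$ only in the sense that $x_i-x_j=(r_i-r_j)x_{\rm a}$, which shrinks as $x_{\rm a}$ decreases during discharge.

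I would therefore argue directly on $e_{ij}=x_i-x_j$. Its derivative is
\[
\dot e_{ij}=-\frac{p_{\rm a}}{x_{\rm a}}\,e_{ij}-(\eta_i-\eta_j).
\]
When $p_{\rm a}/x_{\rm a}\ge p_{\min}/(Na_2)>0$ (the sign being fixed within a mode by Assumption~\ref{ass:power}), this is exponentially stable with an input-to-state bound. That yields
\[
\limsup|x_i-x_j|\le \frac{2Na_2}{p_{\min}}\sup|\eta|,
\]
and dividing by $C_iV_i$ (assuming $C_iV_i$ are comparable, or balancing in the energy-normalized sense) gives $\varepsilon_S$. The delicate point is checking that the sign of $p^*$ matches the mode, so that the linear term is stabilizing. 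I would make this explicit, and note that the persistent bound $|p^*|\ge p_{\min}$ is exactly what prevents the decay rate from vanishing.
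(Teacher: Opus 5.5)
\textbf{Verdict: genuine gap in Step 4.} Your argument establishes energy balancing, not the SoC balancing the theorem asks for.

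\textbf{Steps 1--3.} These are sound and agree with the paper's proof in substance. Your additive perturbation is the paper's multiplicative one, $\eta_i=k\Delta_i x_{{\rm d},i}$ with $k=p_{\rm a}/x_{\rm a}$. The saturation is removed the same way, and summing gives the same power-tracking bound. Your check that $p_i$, hence $\dot x_i$, stays bounded, so that Theorem~\ref{thm:ppdac-sinusoidal} applies, is more careful than the paper.

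\textbf{The gap.} Your ISS estimate controls $x_i-x_j=C_iV_iS_i-C_jV_jS_j$, i.e.\ \emph{energy} balancing. ``Dividing by $C_iV_i$'' does not turn this into SoC balancing when the $C_iV_i$ differ. That is exactly the setting of the theorem: known, heterogeneous $C_i$, with $180$--$230$\,Ah in the simulation. Indeed
\[
S_i-S_j=\frac{x_i-x_j}{C_iV_i}+S_j\Bigl(\frac{C_jV_j}{C_iV_i}-1\Bigr),
\]
and the second term is not small: exact energy equality forces $S_i/S_j=C_jV_j/(C_iV_i)\neq 1$. Assuming ``comparable'' $C_iV_i$ adds a hypothesis, and it would not even suffice for an arbitrary prescribed $\varepsilon_S$.

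\textbf{The fix.} The observation you are missing is the heart of the paper's argument: the common rate $k(t)$ acts on the SoC itself. Since $x_{{\rm d},i}=C_iV_iS_i$ and $\dot S_i=-p_i/(C_iV_i)$, the capacities cancel, giving $\dot S_i=-k(1+\Delta_i)S_i$, i.e.\ $\dot S_i=-kS_i-\eta_i/(C_iV_i)$. Running your own ISS argument on $S_i-S_j$,
\[
\frac{d}{dt}(S_i-S_j)=-k(S_i-S_j)-\Bigl(\frac{\eta_i}{C_iV_i}-\frac{\eta_j}{C_jV_j}\Bigr),
\]
gives $\limsup|S_i-S_j|\le \frac{2Na_2}{p_{\min}}\max_{\ell}\frac{\limsup|\eta_\ell|}{C_\ell V_\ell}$ with no extra assumption.

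\textbf{Comparison with the paper.} The paper phrases this step via $V_{ij}=\tfrac12(S_i-S_j)^2$, but then asserts a band for the ratio $S_i/S_j$. That does not follow, because ratios are invariant under the nominal flow, as you noticed. Your difference-ISS form is the sound way to conclude.

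\textbf{A caveat shared with the paper.} Having $k\ge p_{\min}/(Na_2)$ for all $t$ also gives $\dot x_{\rm a}=-p_{\rm a}-\bar\eta\le -p_{\min}/(2N)$ eventually. This cannot coexist with $x_i\ge a_1$ on an infinite horizon. Your remark that differences shrink only because $x_{\rm a}$ decreases is this same tension, but it is not a defect relative to the paper's proof.
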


\begin{proof}
We present the proof only for the discharging case. The estimation errors are defined as
$
 e_{x,i}(t)=\hat x_{{\rm a},i}(t)-x_{\rm a}(t),\;
 e_{p,i}(t)=\hat p_{{\rm a},i}(t)-p_{\rm a}(t).
$
From the  Theorem~\ref{thm:ppdac-sinusoidal} and Lemma~\ref{lem:2} , the estimator errors are ultimately bounded and their bounds decrease with $\beta$ and $\kappa$. Choose \( \beta \geq \frac{2\gamma_{\rm s}}{a_1 \lambda_2} \) so that there exists $T>0$ with
$\hat x_{{\rm a},i}(t)\ge \tfrac12 a_1$ for all $t\ge T$.

For $t\ge T$, the discharging power can be written as
\begin{align*}
p_i(t)
&=\frac{x_{{\rm d},i}(t)}{x_{\rm a}(t)+ e_{x,i}(t)}\bigl(p_{\rm a}(t)+ e_{p,i}(t)\bigr)\\
&=\frac{\left(1+\frac{e_{p,i}(t)}{p_{\rm a}(t)}\right)}
       {\left(1+\frac{ e_{x,i}(t)}{x_{\rm a}(t)}\right)}
   \frac{p_{\rm a}(t)}{x_{\rm a}(t)}\,x_{{\rm d},i}(t) \\
&=k(t)\bigl(1+\Delta_i(t)\bigr)x_{{\rm d},i}(t),
\end{align*}
where
\[
k(t)=\frac{p_{\rm a}(t)}{x_{\rm a}(t)},\;
\Delta_i(t)=
\frac{1+\frac{ e_{p,i}(t)}{p_{\rm a}(t)}}
     {1+\frac{ e_{x,i}(t)}{x_{\rm a}(t)}}-1.
\]

Recalling $\dot S_i(t)=-\frac{1}{C_iV_i}p_i(t)$ and $x_{{\rm d},i}(t)=C_iV_iS_i(t)$,
\[
\dot S_i(t)=-k(t)\bigl(1+\Delta_i(t)\bigr)S_i(t). 
\]

Consider $V_{ij}=\tfrac12(S_i-S_j)^2$. Then its time derivative is given by
$
\dot V_{ij}
=-k(t)(S_i-S_j)\bigl((1+\Delta_i)S_i-(1+\Delta_j)S_j\bigr).
$
Thus $\dot V_{ij}<0$ whenever $S_i>S_j$ and $\frac{S_i}{S_j}>\frac{1+\Delta_j}{1+\Delta_i}$ (and vice versa). Since $\Delta_i(t)$ are bounded and can be made arbitrarily small by choosing $\beta,\kappa$ large.
In steady state there exist $\Delta^- = \frac{1-\frac{N\psi\gamma_p}{\kappa p_{\rm min}}}
     {1+\frac{ \gamma_s}{a_1\beta \lambda_2}}-1<0 <\Delta^+ =  \frac{1+\frac{N\psi\gamma_p}{\kappa p_{\rm min}}}
     {1-\frac{ \gamma_s}{a_1\beta \lambda_2}}$ such that
\[
\frac{1+\Delta^-}{1+\Delta^+}\le \frac{S_i}{S_j}\le
\frac{1+\Delta^+}{1+\Delta^-}.
\]
Hence SoC balancing is achieved with any prescribed accuracy $\varepsilon_s$ by selecting $\beta,\kappa$ sufficiently large. The total discharging power of the BESS is
\vspace{-2mm}
\[
p_\Sigma(t)=\sum_{i=1}^N p_i(t)
          =k(t)\sum_{i=1}^N (1+\Delta_i(t))x_{{\rm d},i}(t).
\]
\vspace{-3mm}
Since $\sum_{i=1}^N x_{{\rm d},i}(t)=N x_{\rm a}(t)$ and $k(t)=\frac{p_{\rm a}(t)}{x_{\rm a}(t)}$, we get
\vspace{-0.5mm}
\begin{align*}
p_\Sigma(t)
&=p^\ast(t)+k(t)\sum_{i=1}^N \Delta_i(t)x_{{\rm d},i}(t).
\end{align*}
\vspace{-1mm}
If $\Delta^-\le \Delta_i(t)\le \Delta^+$ for all $i$ in steady state, then
\vspace{-1mm}
\[
\Delta^- \sum_{i=1}^N x_{{\rm d},i}(t)\le \sum_{i=1}^N \Delta_i(t)x_{{\rm d},i}(t)
\le \Delta^+ \sum_{i=1}^N x_{{\rm d},i}(t),
\]
which implies
\[
(1+\Delta^-)p^\ast(t)\le p_\Sigma(t)\le (1+\Delta^+)p^\ast(t).
\]
Hence, choosing sufficiently large $\beta$ and $\kappa$ allows $p_\Sigma(t)$ to become arbitrarily close to $p^\ast(t)$.
\end{proof}
\vspace{-4mm}
\section{Simulation Results}\label{sec:simu}
\begin{figure}[!t]
\centering
\includegraphics[width=\columnwidth]{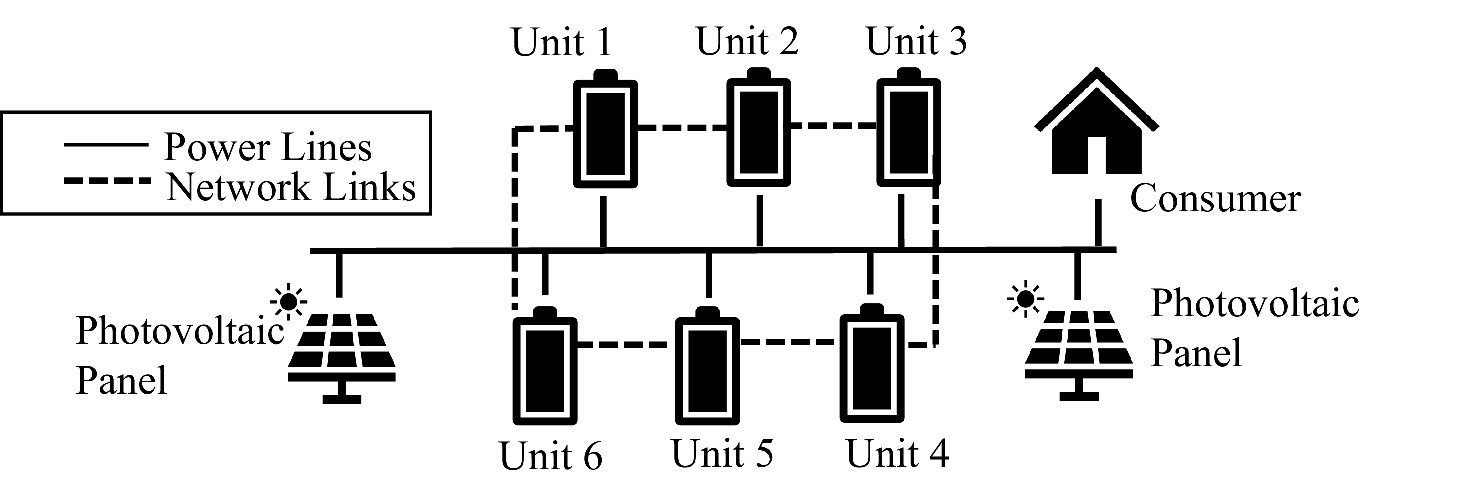}
\caption{Configuration of a six-unit networked BESS.}
\label{fig:MG}
\end{figure}
We study a six-unit battery energy storage system interconnected in a ring topology, as illustrated in Fig.~\ref{fig:MG}. Only the first unit has access to the desired power signal $p^{*}(t) = 4200\sin(t) + 4200~W$. The battery capacities are given by $(C_1, C_2, C_3, C_4, C_5, C_6) = (180, 190, 200, 210, 220, 230)$Ah., and all units have a nominal voltage of $50$~V. The initial state of charge is selected as $(S_1(0), S_2(0), S_3(0), S_4(0), S_5(0), S_6(0)) = (0.96, 0.89, 0.75, 0.80, 0.73, 0.88)$. For the distributed estimators, the design parameters are chosen as $\beta = 400$, and $\kappa = 300$.

We set the signal amplitude as $A_{\rm m}=500$ to implement the privacy-preserving DAC. Each agent assigns independent random frequency values $\omega_{ij}$ to its neighboring agents and these parameters are represented in the matrix
\[
\omega =
\begin{bmatrix}
0 & 1.11 & 0 & 0 & 0 & 3.37 \\
6.12 & 0 & 2.46 & 0 & 0 & 0 \\
0 & 4.03 & 0 & 3.80 & 0 & 0 \\
0 & 0 & 8.15 & 0 & 2.49 & 0 \\
0 & 0 & 0 & 5.75 & 0 & 6.89 \\
5.22 & 0 & 0 & 0 & 6.42 & 0
\end{bmatrix}.
\]
The $(i,j)$ entry of $\omega$ represents the random frequency selected by agent $i$ for agent $j$ to generate the signal $s_{ij}(t)=500\sin\big(\omega_{ij} t\big)$ for neighboring agents, which is then used to construct the  masking signal $m_i(t)$.

Fig.~\ref{fig:soc_power} shows that the six battery units reach SoC balancing while the total power output accurately follows the desired discharging power. Fig.~\ref{fig:estimators} shows that the privacy-preserving average unit state estimator and the estimated average desired power accurately track their corresponding true averages.

Theorem~\ref{thm:ext_privacy} guarantees that an external observer cannot reconstruct the private states $x_i(t)$ or their derivatives $\dot{x}_i(t)$. To illustrate this, we consider a discharging scenario where an eavesdropper attempts to estimate $x_i(t)$ and the individual power outputs $p_i(t)=-\dot{x}_i(t)$ using the observer in \cite{zhang2022privacy}. As shown in Figs.~\ref{fig:x_each_obs} and~\ref{fig:power_each_obs}, when the proposed privacy mechanism is implemented, the reconstructed signals deviate noticeably from the true trajectories $x_i(t)$ and $p_i(t)$, confirming the effectiveness of the privacy protection. The privacy of each battery unit’s power signal $p_i(t)=-\dot{x}_i(t)$ is measured using the root-mean-square error (RMSE). This is obtained by comparing the attacker’s estimate with the true signal and computing the RMS of the difference after the transient period. Fig.~\ref{fig:privacy} shows the privacy metric evaluated for different signal amplitudes $A_{\rm m}$, indicating that larger values lead to stronger privacy and confirming its role as a tuning parameter.
\begin{figure}[t]
\centering
\newcommand{\figH}{2.4cm}
\begin{minipage}[t]{0.49\columnwidth}
  \centering
  \includegraphics[height=\figH]{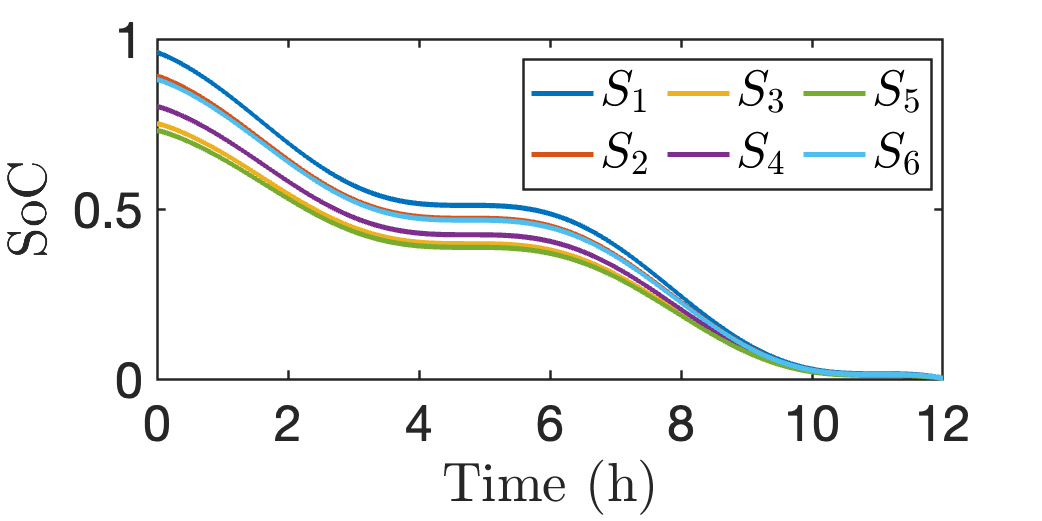}\\[-1mm]
  {\footnotesize (a) SoC balancing}
\end{minipage}\hfill
\begin{minipage}[t]{0.49\columnwidth}
  \centering
  \includegraphics[height=\figH]{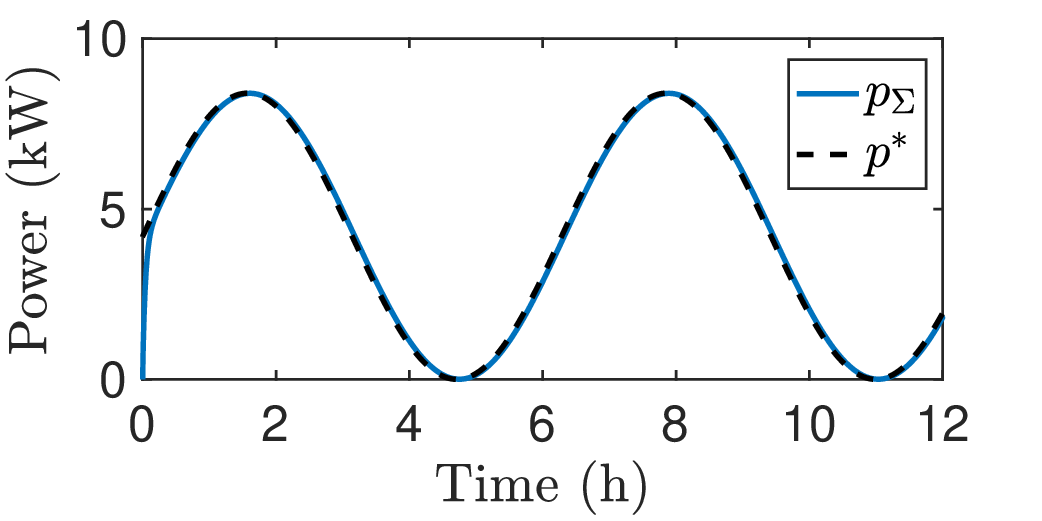}\\[-1mm]
  {\footnotesize (b) Power tracking}
\end{minipage}
\caption{State-of-charge balancing and power tracking performance.}
\label{fig:soc_power}
\end{figure}

\begin{figure}[t]
\centering
\newcommand{\figH}{2.35cm}
\begin{minipage}[t]{0.5\columnwidth}
  \centering
  \includegraphics[height=\figH]{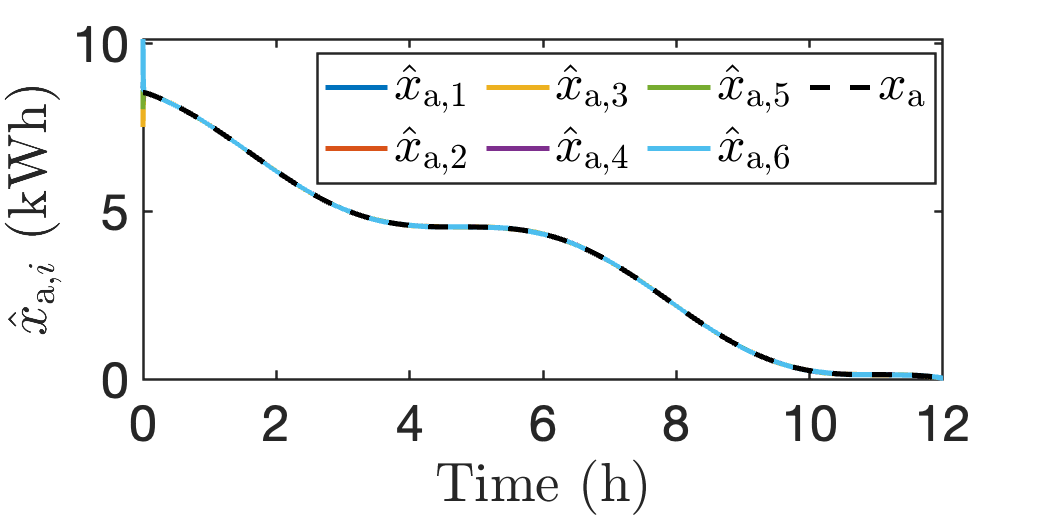}\\[-1mm]
  {\footnotesize (a)Estimation of the average unit state under privacy protection.}
\end{minipage}\hfill
\begin{minipage}[t]{0.5\columnwidth}
  \centering
  \includegraphics[height=\figH]{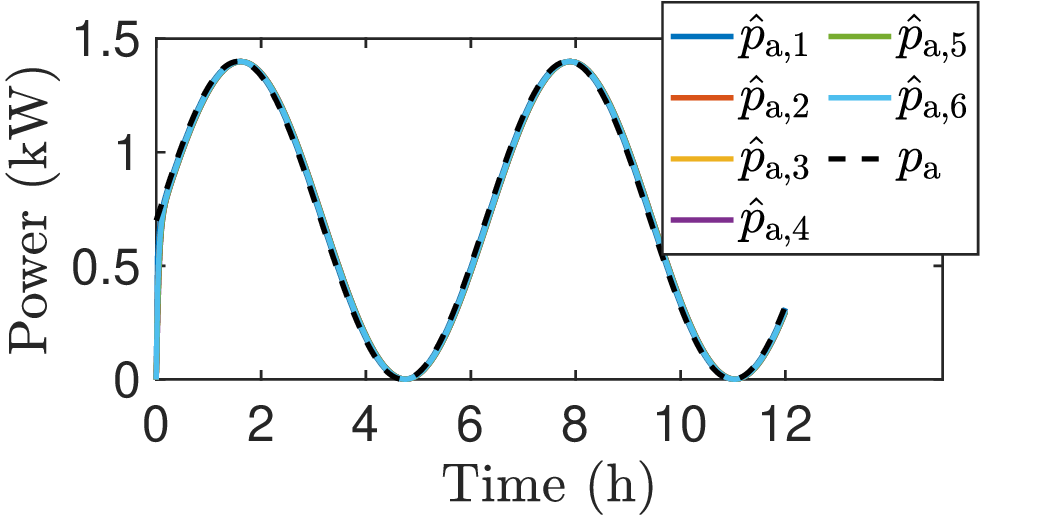}\\[-1mm]
  {\footnotesize (b) Estimation of average desired power.}
\end{minipage}
\caption{Estimator performance.}
\label{fig:estimators}
\end{figure}

\begin{figure}[!t]
\centering
\includegraphics[width=\columnwidth]{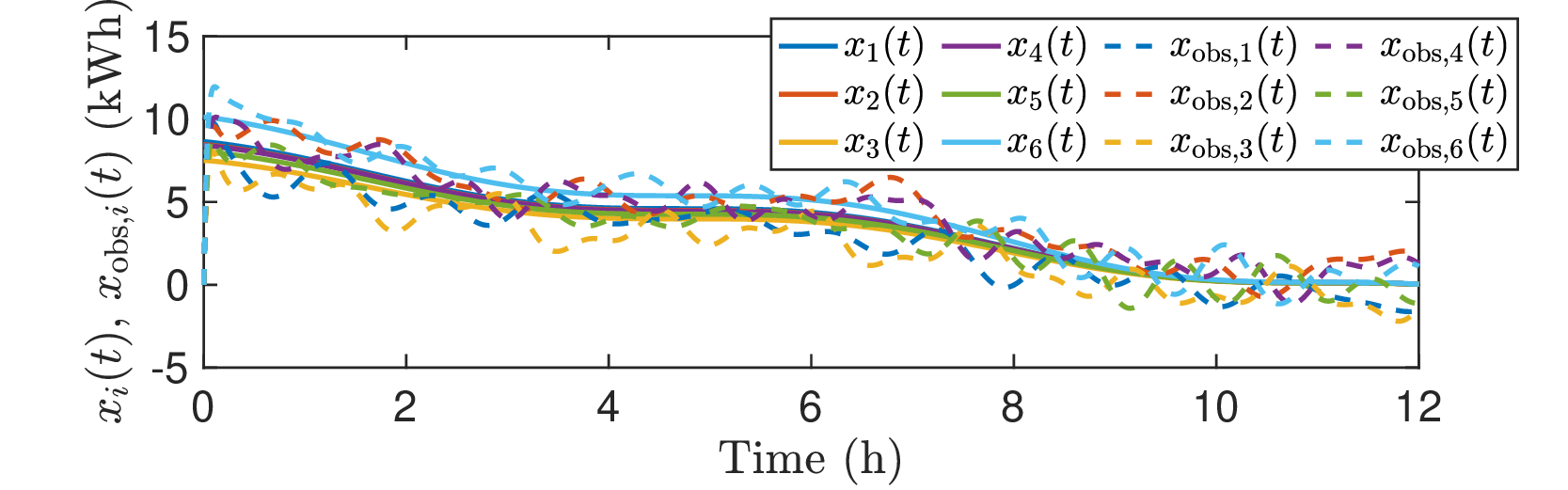}
\caption{Eavesdropper estimates $x_{{\rm obs},i}(t)$ vs true states $x_i(t)$}
\label{fig:x_each_obs}
\end{figure}

\begin{figure}[!t]
\centering
\includegraphics[width=\columnwidth]{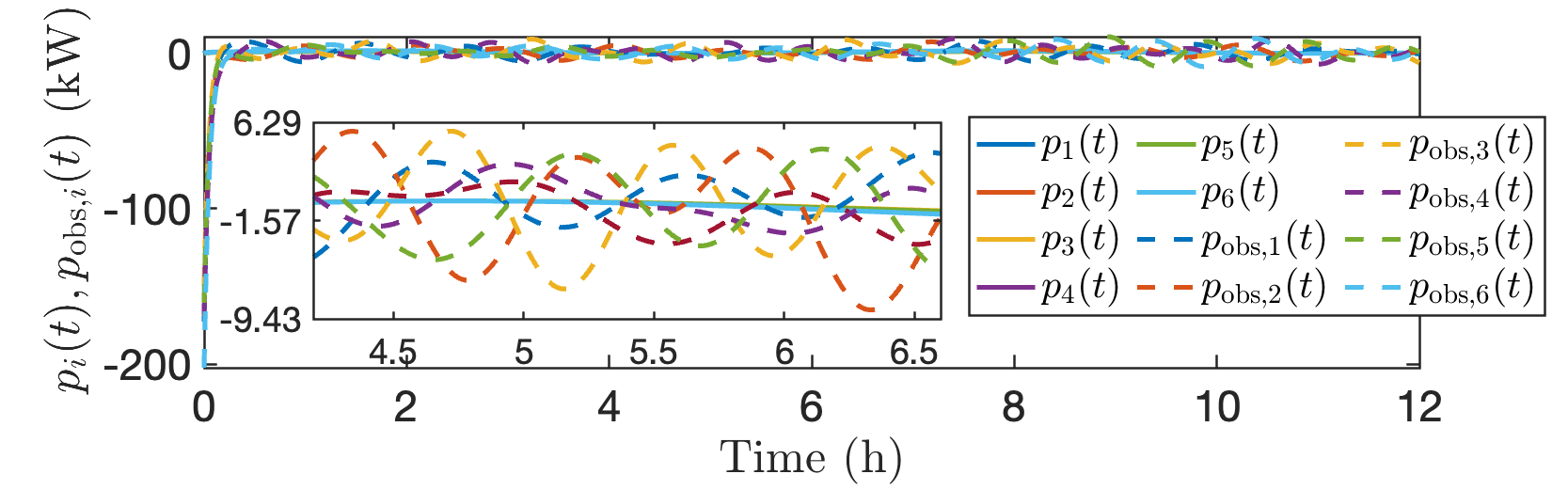}
\caption{Eavesdropper estimates $p_{{\rm obs},i}$ vs true states $p_i(t)$.}
\label{fig:power_each_obs}
\end{figure}

\begin{figure}[!t]
\centering
\includegraphics[width=\columnwidth]{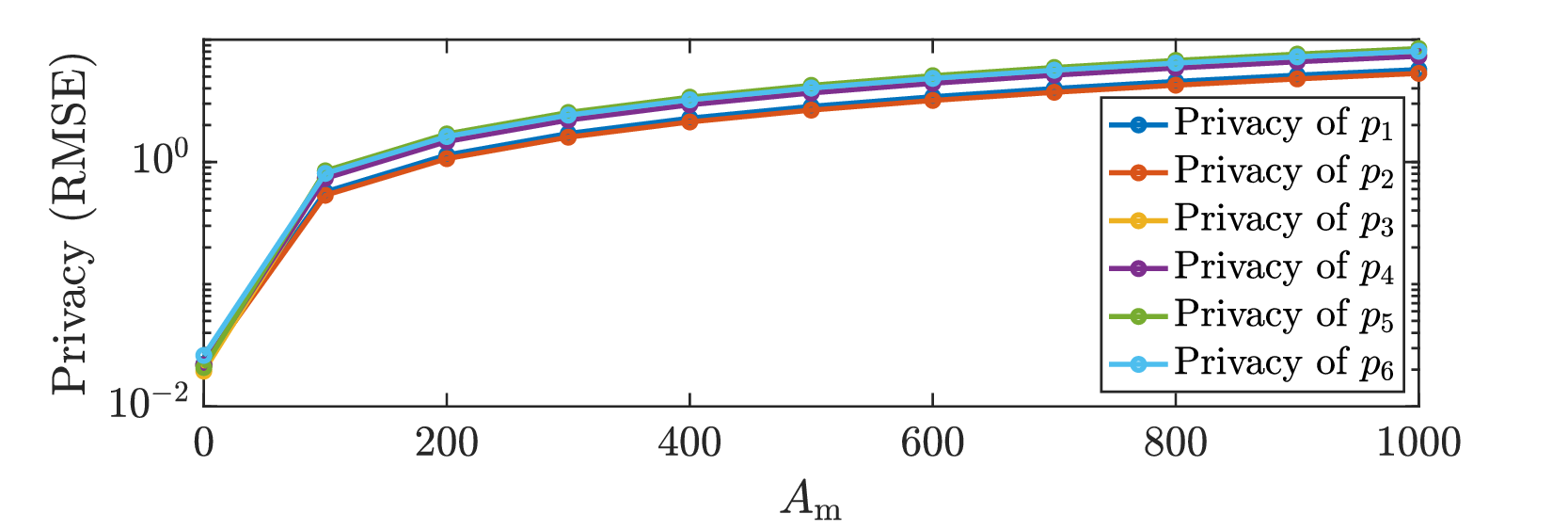}
\caption{Effect of $A_{\rm m}$ on the privacy of $p_i$}
\label{fig:privacy}
\end{figure}

\section{Conclusions}\label{sec:con}
In this paper, we developed a privacy-preserving DAC algorithm based on reference signal masking. We proved that the proposed algorithm preserves the privacy of both the reference signals and their derivatives without compromising the convergence properties of the conventional DAC. Furthermore, case studies were presented to demonstrate the practicality of the proposed approach in SoC balancing for a networked BESS. Simulation results illustrated effective SoC balancing and accurate power tracking while preventing the disclosure of private information of individual battery units.

\bibliographystyle{IEEEtran}
\bibliography{Ref}

\end{document}